\newcommand*{\SavedLstInline}{}
\LetLtxMacro\SavedLstInline\lstinline
\DeclareRobustCommand*{\lstinline}{%
  \ifmmode
    \let\SavedBGroup\bgroup
    \def\bgroup{%
      \let\bgroup\SavedBGroup
      \hbox\bgroup
    }%
  \fi
  \SavedLstInline
}
\lstdefinelanguage{ML}{
  alsoletter={*},
  morekeywords={datatype, of, if, *},
  sensitive=true,
  morecomment=[s]{/*}{*/},
  morestring=[b]"
}
\lstdefinelanguage{scala}{
  alsoletter={@,=,>},
  otherkeywords={passes},
  morekeywords={abstract, case, catch, choose, class, def, do, else, extends, false, final, finally, for, if, implicit, import, match, new, null, object, lazy, let,
override, package, private, protected, requires, return, sealed, super, this, throw, trait, try, true, type, val, var, while, with, yield, domain, 
postcondition, precondition, constraint, assert, forAll, _, return, @generator, ensure, require, ensuring, assuming, otherwise, asserting},
  sensitive=true,
  morecomment=[l]{//},
  morecomment=[s]{/*}{*/},
  morestring=[b]"
}
\newcommand{\codestyle}{\ttfamily}
\newcommand*\idstyle{%
        \expandafter\id@style\the\lst@token\relax
}
\def\id@style#1#2\relax{%
        \ifcat#1\relax\else
                \ifnum`#1=\uccode`#1
                \fi
        \fi
}
\newcommand\ourtitle{On Repair with Probabilistic Attribute Grammars}         
\title{\ourtitle}
\author{
Manos Koukoutos
    \institute{EPFL}
    \email{emmanouil.koukoutos@epfl.ch}
\and
Mukund Raghothaman
\institute{University of Pennsylvania}
\email{rmukund@seas.upenn.edu}
\and
Etienne Kneuss
\institute{EPFL}
\email{etienne.kneuss@epfl.ch}
\and
Viktor Kuncak
\institute{EPFL}
\email{viktor.kuncak@epfl.ch}
}
\begin{document}

\maketitle

\begin{abstract}
Program synthesis and repair have emerged as an exciting area of research, driven by the potential for revolutionary advances in programmer productivity. Among most promising ideas emerging for synthesis are syntax-driven search, probabilistic models of code,
and the use of input-output examples. 
Our paper shows how to combine these techniques and use them for program repair,
which is among the most relevant applications of synthesis to general-purpose code.
Our approach combines semantic specifications, in the form of pre- and post-conditions and input-output examples with syntactic specifications in the form of term grammars and AST-level statistics extracted from code corpora. We show that synthesis in this framework can be viewed as an instance of graph search, permitting the use of well-understood families of techniques such as A*. We implement our algorithm in a framework for verification, synthesis and repair of functional programs, demonstrating that our approach can repair programs that are beyond the reach of previous tools.
\end{abstract}



\section{Introduction}
\label{sec:intro}

Program synthesis has attracted significant attention as a promising and ambitious approach
to improve software development productivity  \cite{Manna-1971,Sketch,Albarghouthi2013,Feser2015,CVC4-SyGuS,Brahma,Gulwani-etal-2011,InSynth,Osera-Myth-15, Osera-Myth-16,Synquid,LeonSynth-OOPSLA, LeonSynth-SYNT}. 
Among possibly deployment scenarios of synthesis is program repair, in which
a tool localized an error repairs the program by synthesizing an alternative
code fragment \cite{PeiETAL15RepairTests,LeonSynth-CAVRepair}.

Synthesis tools typically leverage both \emph{semantic} and \emph{syntactic} specification of the synthesis process. Typical forms of semantic information are specifications in the form of
input-output examples, as well as reference implementations and predicates. These specifications are important to identify the specific intent of the developer. Among the syntactic forms of specification are simple types and probabilistic models of code. Syntactic models of code are appealing for two main reasons: they can improve the efficiency of search, and they can provide implicit, reusable baseline specifications that apply across multiple synthesis problems, reducing the need to provide near-complete semantic specifications of the desired functionality.

Our work aims to unify these two sources of specifications. To this extent, we describe a specification framework and an algorithms that efficiently searches for programs based on:
\begin{itemize} 
\item a probabilistic and polymorphic attribute grammar that restricts and prioritizes the relevant space, and
\item dynamically growing set of input examples along with test predicates, which narrows
down the developer's intent.
\end{itemize}

Our attribute grammars describe the generation of trees and can be thought of as 
a probabilistic extension of type systems used in previous work. The use of generics
and attributes makes the grammars compact and enables us to naturally describe
constructs such as conditionals that are needed to generate more complex code fragments.
 Thanks to the fact
that we use a language supporting generics and value classes (Scala), we show how to describe
such derived grammars in the form of a wrapper library. The developers can therefore
use mostly existing programming language constructs to describe the intended models of
use of their libraries. The use of attributes enables the developers to incorporate
domain knowledge, such as associativity of operators, which can break symmetries and
reduce the search space. In addition to explicitly writing such models, we present an algorithm to automatically derive models from a corpus of code. Our results show that the derived
corpus provides benefits in synthesis of meaningful code fragments. 

Our algorithm efficiently supports specifications that include concrete inputs. In the
case where the provided set of inputs is not sufficient to narrow down the search, our system 
can also make use of constraint solvers to generate additional inputs. 
To make search driven by probabilistic grammars efficient, we adopt a top-down approach for search. We introduce techniques to prune the search for expressions using partial evaluation of incomplete grammars on known input examples. Furthermore, we show how to incorporate equivalence  modulo input-output examples in such a search algorithm, which is a technique that has so far been deployed primarily in bottom-up search techniques.

The practical focus of our techniques is on program repair. We use manually supplied
and automatically derived examples to localize the expression likely to be responsible, then synthesize an alternative 
expression in its place.
Examples
are particularly relevant because they can be used to localize the error
efficiently. The 
support of input examples in our algorithm makes it efficient for synthesizing
candidate repairs.
Furthermore, our support for probabilistic grammars enabled us to adopt the following approach
to repair with probabilistic models: 
in addition to using grammars describing general Scala expressions, we bias the search
for repairs towards the constructs specific for a program being repaired, making
the synthesis more effective. We therefore show that our approach is able to produce
program repairs whose size puts them beyond the reach of analogous existing 
non-probabilistic repair approaches.


\section{Synthesis Algorithm}
\label{sec:alg}



\subsection{Synthesis problem}

Let $\phi = \phi(\seqa, x)$ be the \emph{specification} of a programming task,
specifying a relation between a sequence of input parameters \seqa
and an output variable $x$.
$\phi$ can be given as a logical predicate, a set of input-output examples,
a reference implementation, or a combination thereof.
Also, let $\pcname = \pcname(\seqa)$ be the \emph{path condition} for the problem,
i.e. a relation satisfied by hypothesis by the input parameters.
Let also $t = t(\seqa)$ express a term in the target language.

The \emph{synthesis problem} is then defined as asking for a constructive solution for the formula

$$\forall \seqa. \exists t. \pcname \implies \phi[x \mapsto t]$$

In other words, we want to find an expression $t$ such that replacing every occurrence of the output variable $x$ in the
synthesis predicate $\phi$ causes the formula $\pcname \implies \phi$ to be \emph{valid}. In this section, we will
describe the probabilistic enumeration procedure used to solve such formulas.

\subsection{Counter-example guided inductive synthesis (CEGIS)}

The synthesis algorithm is an instantiation of CEGIS, a widely used framework to describe program synthesis algorithms~%
\cite{Sketch, Gulwani-etal-2011}. The main technical difficulty with finding a $t$ which satisfies rule $\TerminalRule$
is the universally quantified ``$\forall \seqa$'', where the input variables $\seqa$ may range over a large, or even
potentially infinite domain. We present the CEGIS routine in Algorithm~\ref{alg:cegis}. The procedure is parameterized
by two sub-procedures, $\Search$ and $\Verify$. At each step, we maintain a finite set $A$ of concrete input points. In
the search phase, the algorithm finds an expression $t$, such that
\begin{equation}
\Land_{\seqa \in A} \pcname \implies \phi[x \mapsto t].
\label{alg:cegis:search}
\end{equation}
Observe that for a specific choice of $A$, determining whether a given $t$ satisfies requirement~\ref{alg:cegis:search}
reduces to simply evaluating the resulting expression. In the verification phase, we confirm that the candidate
expression $t$ works for all inputs $\seqa$. The procedure $\Verify(\seqa, \pcname, \phi, x, t)$ returns either:
\begin{inparaenum}[(\itshape a\upshape)]
\item $\textsf{valid}$, if $t$ satisfies the synthesis predicate for all values of $\seqa$, or
\item a counter-example input $\seqa_{\text{cex}}$, such that $\pcname[\seqa \mapsto \seqa_{\text{cex}}] \land \lnot
    \phi[\seqa \mapsto \seqa_{\text{cex}}, x \mapsto t]$ i.e. $\seqa_{\text{cex}}$ satisfies the path condition $\pcname$, but $t$ fails the
  synthesis predicate.
\end{inparaenum}

\begin{algorithm}
\caption{$\CEGIS(\br{\seqa}{\pcname}{\phi}{x})$.}
\label{alg:cegis}
\begin{enumerate}
\item Initialize $A \coloneqq \emptyset$. $A$ is a finite set of concrete instantiations of $\seqa$.
\item Repeat forever:
  \begin{enumerate}
  \item Let $t = \Search(\seqa, A, \pcname, \phi, x)$.
  \item Let $\textsf{res} = \Verify(\seqa, \pcname, \phi, x, t)$.
  \item If $\textsf{res} = \textsf{valid}$, then return $t$.
  \item Otherwise, if $\textsf{res} = \seqa_{\text{cex}}$, update $A \coloneqq A \union \{ \seqa_{\text{cex}} \}$.
  \end{enumerate}
\end{enumerate}
\end{algorithm}

We discharge the calls to $\Verify$ by invoking Leon's verification system~\cite{Leon-Scala13}. The main technical
contribution of this paper is in using probabilistic enumeration to instantiate the $\Search$ subroutine. We devote the
rest of this section to describing this procedure.

\subsection{Probabilistic expression grammars}

The goal of the $\Search(\seqa, A, \pcname, \phi, x)$ subroutine is to find a candidate expression $t$ which satisfies
the synthesis predicate for all input points $\seqa \in A$. The expression $t$ is drawn from the productions of a
probabilistic context-free grammar (PCFG), such as that shown below:
\[
\begin{array}{rcll}
  \textsf{IntExpr} & \Coloneqq & \lstinline|0| & (\text{Rule } R_0, p_0 = 0.15) \\
                   & \mid & \lstinline|1| & (\text{Rule } R_1, p_1 = 0.3) \\
                   & \mid & \lstinline|x| & (\text{Rule } R_x, p_x = 0.3) \\
                   & \mid & \textsf{IntExpr} + \textsf{IntExpr} & (\text{Rule } R_+, p_+ = 0.15) \\
                   & \mid & \lstinline|if ($\textsf{BoolExpr}$) { $\textsf{IntExpr}$ } else { $\textsf{IntExpr}$ }|
                          & (\text{Rule } R_{\text{ite}}, p_{\text{ite}} = 0.1) \\
  \textsf{BoolExpr} & \Coloneqq & \lstinline|$\textsf{IntExpr}$ $\leq$ $\textsf{IntExpr}$|
                                & (\text{Rule } R_\leq, p_\leq = 0.8) \\
                    & \mid & \lstinline|$\textsf{BoolExpr}$ && $\textsf{BoolExpr}$|
                           &  (\text{Rule } R_{\land}, p_\land = 0.2)
\end{array}
\]
The productions of $\textsf{IntExpr}$ expand to expressions of type \lstinline|Int|, and the productions of
$\textsf{BoolExpr}$ expand to expressions of type \lstinline|Bool|. Each rule $R$ of each non-terminal symbol $N$ is
annotated with a number $p_R$, which indicates the probability of $N$ expanding to $R$. Naturally, the probability $p_e$
of a production $e$ is then defined as the product of the probabilities of all rules invoked in the derivation tree: for
example, the production \lstinline|x + 1| has probability $p_+ p_x p_1 = 0.0135$.

More generally, a PCFG is a pair $G = (\mathcal{N}, \mathcal{R})$, where:
\begin{enumerate}
\item $\mathcal{N}$ is a finite, non-empty set of non-terminal symbols, where each non-terminal $N \in \mathcal{N}$ is
  associated with a type $T_N$,
\item $\mathcal{R}$ maps each non-terminal $N$ to a finite set of production rules $\mathcal{R}(N)$. Each production
  rule $R \in \mathcal{R}(N)$ is a well-typed construct of the form $R = t(N_1, N_2, \ldots, N_k)$, where $t$ is the
  top-level operator, $N_1$, $N_2$, \ldots, $N_k$ are the child non-terminal symbols, and such that the output types of
  $t$ and $N$ coincide: $T_t = T_N$, and
\item each rule $R$ is associated with a probabiltiy $p_R \in [0, 1]$ such that for all non-terminals $N$, $\sum_{R \in
  \mathcal{R}(N)} p_R = 1$.
\end{enumerate}

Notice that $\mathcal{N}$ does not necessarily coincide with the set of types of the target language;
we only require that the mapping from non-terminals to types is surjective.

We make two additional requirements of the PCFGs we consider in this paper: first, we require the probability of each
rule, $p_R < 1$ (note the strict inequality), and second, we require that $G$ be \emph{unambiguous}, i.e. that every
expression produced by the grammar have a unique derivation tree. If we relax the first assumption, then there are some
technical difficulties in setting up the probability space, and results such as Theorem~\ref{thm:alg:Dijkstra} will need
additional constraints to be true. The second assumption simplifies the definition of $P(e)$, which would otherwise be
the sum of the probabilities over all derivation trees. Moreover, it reduces the search space to a tree (instead of a
DAG), and thereby allows us to use stronger versions of the guarantees provided by search algorithms.

For simplicity of notation, and whenever the intent is clear, we will conflate productions $e$ of the grammar with the
expressions $t_e$ that they encode. We write $\Expansions_N$ for the set of all productions of a non-terminal $N$. Given
a production $e \in \Expansions_N$, we define its probability, $P(e)$, as the product of the probabilities of all rules
used in $e$. It can be shown that $\sum_{e \in \Expansions_N} P(e) = 1$. The productions of $G$ will constitute elements
of the sample space under consideration, and the space of events is, as usual, the powerset $2^{\Expansions_N}$ of
productions. We will discuss the extraction of PCFGs from a code corpus in Section~\ref{sec:grammar-gen}.

\subsection{Top-down expression enumeration as graph search}

In the search phase of the CEGIS loop, we want to find an expression $t$ which satisfies the requirement in formula~%
\ref{alg:cegis:search}. One approach to implementing $\Search(\seqa, A, \pcname, \phi, x)$ is to enumerate all candidate
expressions, in some order, until a suitable answer is found. Recall that, because $A$ is a finite set, for a given
choice of $A$ and $t$, determining the satisfaction of formula~\ref{alg:cegis:search} reduces to evaluating the
conjunction, and does not involve expensive calls to an SMT solver. The enumerative SyGuS solver~\cite{Transit} uses a
highly optimized form of expression enumeration in the search phase, and is currently among the most competitive SyGuS
solvers. See Algorithm~\ref{alg:search}. In this section, we will show how we utilize a PCFG
to enumerate expressions in order of decreasing probability in order to accelerate the search
process. We will now describe this probabilistic enumeration algorithm.

\begin{algorithm}
\caption{$\Search(\seqa, A, \pcname, \phi, x)$. Implements the search phase of the CEGIS loop by expression
  enumeration.}
\label{alg:search}
\begin{enumerate}
\item Let $G$ be the chosen PCFG, and $N$ be the starting non-terminal such that the types coincide, i.e. $T_N = T_x$.
\item For each $e$ emitted by $\Enumerate(G, N)$:
  \begin{enumerate}
  \item Let $t_e$ be the expression encoded by $e$.
  \item If $\Land_{\seqa \in A} \pcname \implies \phi[x \mapsto t_e] = \lstinline|true|$, return $t_e$.
  \item Otherwise, discard $t$ and continue enumeration.
  \end{enumerate}
\end{enumerate}
\end{algorithm}

\paragraph{Partial productions.}
Engineering the enumeration algorithm turns out to be much simpler if we enumerate expressions top-down rather than
bottom-up. As a first step, we therefore extend the idea of a grammar production into the more general notion of a
\emph{partial} production:
\[
\begin{array}{rcl}
  \tilde{e}_N & \Coloneqq & \lstinline|?|_N \mid
                      t(\tilde{e}_{N_1}, \tilde{e}_{N_2}, \ldots, \tilde{e}_{N_k}),
\end{array}
\]
where $R = t(N_1, N_2, \ldots, N_k)$ is a production rule of $N$ in $G$. In particular, notice that partial productions
may contain incomplete sub-expressions, denoted by $\lstinline|?|_N$. Examples of partial productions include
``\lstinline|x + ?|'' and ``\lstinline|if (x $\leq$ ?) { x } else { ? }|''. We write $\tilde{\Expansions}_N$ for the set
of all partial productions of a non-terminal $N$. We can also speak of the probability of partial productions:
$p_{\tilde{e}}$ of a partial production $\tilde{e}$ is the product of all production rules used to derive $\tilde{e}$.
It will be mathematically more convenient to speak of negative log probabilities: the \emph{cost} of a partial
production $\tilde{e}$,
\begin{equation}
\cost(\tilde{e}) = -\log(p_{\tilde{e}}) = -\sum_{R \in \tilde{e}} \log(p_R).
\label{eq:alg:costdefn}
\end{equation}

Given a pair of partial productions $\tilde{e}_1$ and $\tilde{e}_2$, we say that $\tilde{e}_1$ and $\tilde{e}_2$ are
related by the expansion relation, and write $\tilde{e}_1 \to \tilde{e}_2$, if $\tilde{e}_2$ can be obtained by
replacing the left-most instance of \lstinline|?| in $\tilde{e}_1$ by an appropriate production rule. For example, if
$\tilde{e}_1 = \lstinline|x + ?|$, $\tilde{e}_2 = \lstinline|x + 1|$, and $\tilde{e}_3 = \lstinline|x + (? + ?)|$, then
$\tilde{e}_1 \to \tilde{e}_2$ and $\tilde{e}_1 \to \tilde{e}_3$.

The expansion relation naturally induces a tree $\mathcal{G}$ on the partial productions of a grammar $G$. See Figure~%
\ref{fig:alg:graph}. The initial node is the starting non-terminal, \lstinline|?|. There is an edge from the partial
production $\tilde{e}_1$ to the partial production $\tilde{e}_2$ if they are related by the expansion relation,
$\tilde{e}_1 \to \tilde{e}_2$. The edge $\tilde{e}_1 \to \tilde{e}_2$, produced by an instantiation of the rule $R$, is
annotated with the cost of the rule, $-\log(p_R)$. Then, the resulting graph is a tree because of the unambiguity
assumption on $G$, and the sum of the weights along the path to $\tilde{e}$ is equal to $\cost(\tilde{e})$. The main
insight of this paper is that $\Enumerate(G, N)$ can be thought of as instantiations of various search algorithms on
this tree.

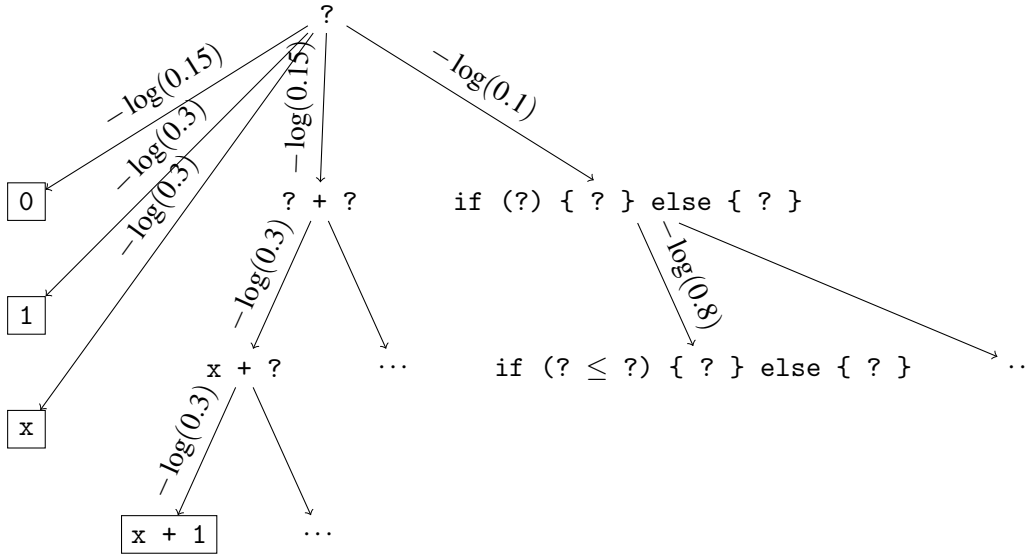
\begin{figure}
\begin{centering}
\begin{tikzpicture}[minimum size=0.5 cm]

\node [draw] (e1) at (-2, -1) {\lstinline|0|};
\node [draw, below=of e1] (e2) {\lstinline|1|};
\node [draw, below=of e2] (e3) {\lstinline|x|};
\node [right=3 of e1] (e4) {\lstinline|? + ?|};
\node [right=of e4] (e5) {\lstinline|if (?) { ? } else { ? }|};

\node (e0) at ($ (e1)!0.5!(e5) + (0, 2.5) $) {\lstinline|?|};

\path [->] (e0) edge [above, sloped] node {$-\log(0.15)$} (e1);
\path [->] (e0) edge [above, sloped] node {$-\log(0.3)$} (e2);
\path [->] (e0) edge [above, sloped] node {$-\log(0.3)$} (e3);
\path [->] (e0) edge [above, sloped] node {$-\log(0.15)$} (e4);
\path [->] (e0) edge [above, sloped] node {$-\log(0.1)$} (e5);

\node (e6) at ({$(e4) + (-1, 0)$} |- {$(e2) + (0, -0.7)$}) {\lstinline|x + ?|};
\path [->] (e4) edge [above, sloped] node {$-\log(0.3)$} (e6);
\node (e7) at ({$(e4) + (1, 0)$} |- {$(e2) + (0, -0.7)$}) {$\cdots$};
\path [->] (e4) edge (e7);

\node [draw] (e8) at ({$(e6) + (-1, 0)$} |- {$(e3) + (0, -1.4)$}) {\lstinline|x + 1|};
\path [->] (e6) edge [above, sloped] node {$-\log(0.3)$} (e8);
\node (e9) at ({$(e6) + (1, 0)$} |- {$(e3) + (0, -1.4)$}) {$\cdots$};
\path [->] (e6) edge (e9);

\node (e10) at ({$(e5) + (1, 0)$} |- e7) {\lstinline|if (? $\leq$ ?) { ? } else { ? }|};
\path [->] (e5) edge [above, sloped] node {$-\log(0.8)$} (e10);
\node [right=of e10] (e11) {$\cdots$};
\path [->] (e5) edge (e11);

\end{tikzpicture}
\end{centering}
\caption{The tree $\mathcal{G}$ of partial productions, connected by the expansion relation. The initial node is the
  empty partial production \lstinline|?| of the non-terminal \lstinline|IntExpr|. Each edge indicates the replacement of
  the left-most incomplete sub-production by an instance of a production rule $R$. The edges are annotated with the
  negative log probabilities, $-\log(p_R)$: because $0 \leq p_R \leq 1$, $\log(p_R) \leq 0$, so the weights $-\log(p_R)$
  are non-negative. A production is enclosed in a rectangle if it is complete.}
\label{fig:alg:graph}
\end{figure}

\begin{algorithm}
\caption{$\Enumerate(G, N)$. Emits a sequence of complete productions of $N$.}
\label{alg:enum}
\begin{enumerate}
\item Let $\pi : \tilde{\Expansions}_N \to \R^{\geq 0}$ be a priority function mapping productions to non-negative real
  numbers.
\item Let $Q$ be a priority queue of partial productions $\tilde{e} \in \tilde{\Expansions}_N$ arranged in ascending
  order according to $\pi$. Initialize $Q \coloneqq \{ \lstinline|?|_N \}$.
\item While $Q$ is not empty:
  \begin{enumerate}
  \item \label{enum:alg:enum:dequeue} Let $\tilde{e}$ be the element at the front of $Q$. Dequeue $\tilde{e}$.
  \item If $\tilde{e}$ is a complete production, emit $\tilde{e}$.
  \item Otherwise, for every neighbor $\tilde{e}'$ such that $\tilde{e} \to \tilde{e}'$ in $\mathcal{G}$, insert
    $\tilde{e}'$ into $Q$.
  \end{enumerate}
\end{enumerate}
\end{algorithm}

See Algorithm~\ref{alg:enum}. It is helpful to visualize the operation of the algorithm as shown in Figure~%
\ref{fig:alg:enum}. The algorithm maintains a priority queue $Q$ of still-unexpanded partial productions. At each step,
the algorithm dequeues the element $\tilde{e}$ at the front of this priority queue and, if it is still incomplete,
expands the leftmost \lstinline|?| with an instance of every applicable production rule $R$. This results in a set of
partial productions $\tilde{e}_1$, $\tilde{e}_2$, \ldots, $\tilde{e}_k$. All of these new partial productions are
inserted back into $Q$. If, on the other hand, $\tilde{e}$ is already complete, it is emitted to be further processed by
$\Search(\seqa, A, \pcname, \phi, x)$. If $\pi(\tilde{e}) = \cost(\tilde{e}) = -\log(P(\tilde{e}))$, i.e. the priority
function is equal to the cost of the partial production, then $Q$ forms the frontier maintained by Dijkstra's algorithm.

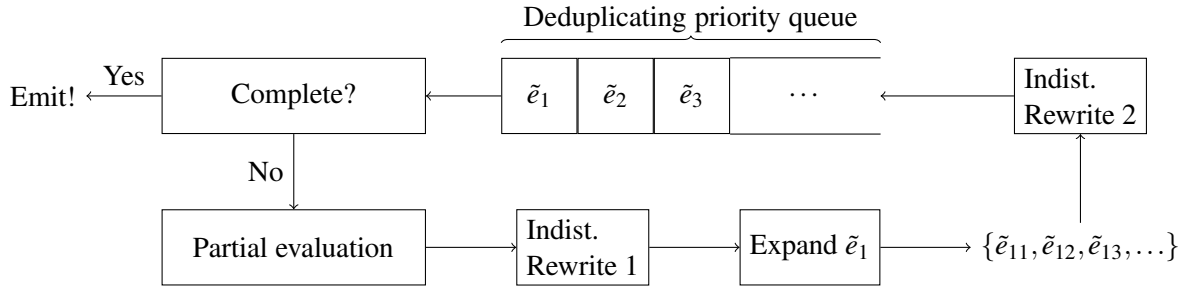
\begin{figure}
\resizebox{\textwidth}{!}{
\begin{tikzpicture}
  \node (emit) {Emit!};
  \node [minimum height=1cm, minimum width=3.5cm, draw, right=of emit] (completeCheck) {Complete?};
  \path [->] (completeCheck) edge [above] node {Yes} (emit);

  \node [draw, minimum size=1cm, right=of completeCheck] (e1) {$\tilde{e}_1$};
  \node [draw, minimum size=1cm, right=0cm of e1] (e2) {$\tilde{e}_2$};
  \node [draw, minimum size=1cm, right=0cm of e2] (e3) {$\tilde{e}_3$};
  \draw (e3.north east) -- +(2, 0);
  \draw (e3.south east) -- +(2, 0);
  \path [->] (e1) edge (completeCheck);

  \node (edots) at ($ (e3.east) + (1, 0) $) {$\cdots$};

  \node [draw, minimum height=1cm, minimum width=3.5cm, below=of completeCheck] (peval) {Partial evaluation};
  \path [->] (completeCheck) edge [left] node {No} (peval);
  \node [draw, minimum height=1cm, align=left, right=1.2 of peval] (indist1) {Indist. \\ Rewrite 1};
  \path [->] (peval) edge (indist1);
  \node [draw, minimum height=1cm, right=1.2 of indist1] (expand) {Expand $\tilde{e}_1$};
  \path [->] (indist1) edge (expand);
  \node [right=1.2 of expand] (echildren) {$\{ \tilde{e}_{11}, \tilde{e}_{12}, \tilde{e}_{13}, \ldots \}$};
  \path [->] (expand) edge (echildren);
  \node [draw, minimum height=1cm, align=left] (indist2) at ( echildren |- edots) {Indist. \\ Rewrite 2};
  \path [->] (echildren) edge (indist2);
  \path [->] (indist2) edge ($ (e3.east) + (2, 0) $);

  \draw [decorate, decoration={brace}] ($ (e1.north west) + (0, 0.2) $)
                                    -- ($ (e3.north east) + (2, 0.2) $)
        node [midway, above] {Deduplicating priority queue};
\end{tikzpicture}
}
\caption{The operation of $\Enumerate(G, N)$ in Algorithm~\ref{alg:enum}. We describe the partial evaluation and
  indistinguishability-based optimizations, including production rewriting and queue deduplication, in
  Section~\ref{sub:alg:optimizations}.}
\label{fig:alg:enum}
\end{figure}

We begin analysing Algorithm~\ref{alg:enum} with the following invariant: Let $\Expansions_c$ be the set of (complete)
productions emitted by $\Enumerate(G, N)$. Then, $\sum_{\tilde{e} \in Q \union \Expansions_c} P(\tilde{e}) = 1$. We will
now outline the proof of its invariance. First, observe that, for all partial productions $\tilde{e}$ with at least one
occurrence of \lstinline|?|,
\begin{equation}
  \sum_{\tilde{e}' \text{s.t.} \tilde{e} \to \tilde{e}'} P(\tilde{e}') = P(\tilde{e}).
  \label{eq:alg:inv:sum1}
\end{equation}
Informally, if each partial production $\tilde{e}$ is viewed as an event resulting from the PCFG, then the partial
productions $\tilde{e}'$ obtained by a single application of the expansion relation encode a mutually exclusive and
exhaustive collection of sub-events of $\tilde{e}$. We then have:
\begin{inparaenum}[(\itshape a\upshape)]
\item the base case: $P(\lstinline|?|) = 1$,
\item inductive case \#1: whenever a complete production $e$ is emitted from $Q$, the set $Q \union \Expansions_c$ is
  left unchanged,
\item and inductive case \#2: whenever a partial production $\tilde{e}$ is further expanded into partial productions
  $\tilde{e}_1$, $\tilde{e}_2$, \ldots, $\tilde{e}_k$, the candidate invariant continues to hold because of
  Equation~\ref{eq:alg:inv:sum1}.
\end{inparaenum}

\begin{thm}
\label{thm:alg:dijkstra}
If the priority function, $\pi(\tilde{e}) = \cost(\tilde{e}) = -\log(P(\tilde{e}))$, then $\Enumerate(G, N)$ returns
some sequence $\sigma$ containing all productions of $N$ in $G$. Furthermore, the probability $P(e)$ of productions in
$\sigma$ is monotonically decreasing.
\end{thm}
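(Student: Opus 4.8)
The plan is to observe that when $\pi = \cost$, the procedure $\Enumerate(G,N)$ is precisely uniform-cost (Dijkstra) search on the weighted tree $\mathcal{G}$ of Figure~\ref{fig:alg:graph}, whose edge weights $-\log p_R$ are non-negative, and then to lift the two textbook guarantees of that search --- that every node is eventually expanded, and that nodes are expanded in non-decreasing order of root-distance --- to the sub-sequence of expanded nodes that happen to be complete productions. I would carry this out in three steps: monotonicity, finiteness of every cost level, and completeness, and then point to the step I expect to be delicate.

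\emph{Monotonicity.} First I would show that the costs of the successively \emph{dequeued} partial productions form a non-decreasing sequence. This is the standard Dijkstra invariant: when $\tilde{e}$ is taken from the front of $Q$ it has minimal cost among the elements of $Q$; the partial productions inserted in its place are exactly its children in $\mathcal{G}$, each of cost $\cost(\tilde{e}) + (-\log p_R) \ge \cost(\tilde{e})$ since edge weights are non-negative; and nothing else in $Q$ changes (in particular, if $\tilde{e}$ is complete, $Q$ only shrinks). Hence after any dequeue the minimum cost present in $Q$ is at least the cost just removed, so consecutive dequeues have non-decreasing cost. The emitted productions $e^{(1)}, e^{(2)}, \dots$ form a sub-sequence of the dequeued ones, so $\cost(e^{(1)}) \le \cost(e^{(2)}) \le \cdots$, and therefore $P(e^{(i)}) = \exp(-\cost(e^{(i)}))$ is weakly decreasing, which is what the theorem asserts; ties are genuinely possible (e.g.\ \texttt{0+1} and \texttt{1+0}), so "monotonically decreasing" should be read as non-increasing.

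\emph{Finiteness of cost levels and completeness.} For the completeness half I would first prove that for every $c \ge 0$ the set $S_c = \{\,\tilde{e} \in \tilde{\Expansions}_N : \cost(\tilde{e}) \le c\,\}$ is finite. Let $\delta = -\log\big(\max_R p_R\big)$, the maximum taken over the finitely many rules of $G$; here $\delta > 0$ exactly because every $p_R < 1$. Each application of the expansion relation adds one edge, of weight $\ge \delta$, so a partial production reached from the root by $d$ expansions has cost $\ge d\delta$; thus every element of $S_c$ lies within depth $\lfloor c/\delta\rfloor$ of the root, and since each non-terminal has finitely many rules $\mathcal{G}$ is finitely branching, so $S_c$ is finite. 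Next, each node of $\mathcal{G}$ is enqueued at most once (it has a unique parent, by unambiguity of $G$) and hence dequeued at most once; so if the run were infinite with all dequeued costs bounded by some $M$, only the finitely many elements of $S_M$ could ever be dequeued, a contradiction, and therefore the dequeued costs tend to $\infty$ (if instead $Q$ empties, the claim is immediate). Finally, fix a complete production $e$ with $c = \cost(e)$ and its unique root path $\tilde{e}_0 \to \cdots \to \tilde{e}_m = e$, every vertex of which has cost $\le c$, and let $i_0$ be the first step at which a partial production of cost $> c$ is dequeued. A short "deepest dequeued ancestor" argument shows that by step $i_0$ every element of $S_c$ has been dequeued: if $\tilde{e}_j$ is the deepest vertex of the path dequeued before step $i_0$ and $\tilde{e}_j \ne e$, then when $\tilde{e}_j$ was processed it was incomplete, so $\tilde{e}_{j+1}$ was enqueued with cost $\le c$, and being of cost $\le c$ it must itself be dequeued before step $i_0$, contradicting the maximality of $j$. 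In particular $e$ is dequeued by step $i_0$, and being complete it is emitted; hence $\sigma$ lists every production of $N$ in $G$, each exactly once.

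\emph{Main obstacle.} I expect the delicate point to be this last progress argument --- ruling out that the frontier $Q$ stalls forever below some cost threshold and starves a complete production. It genuinely needs both standing assumptions on $G$: finiteness of $S_c$ uses the strict inequality $p_R < 1$ (otherwise an edge could have weight $0$ and infinitely many partial productions could share one cost), and the fact that $\mathcal{G}$ is a tree uses unambiguity of $G$, which is what makes "enqueued at most once" and the uniqueness of root paths valid. The monotonicity half, by contrast, is just the Dijkstra invariant and requires nothing beyond non-negativity of the edge weights.
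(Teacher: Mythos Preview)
Your proposal is correct and follows essentially the same route as the paper: both prove monotonicity by observing that children have cost at least that of their parent and that the emitted productions are a subsequence of the dequeued ones, and both prove completeness by using the strict bound $p_R<1$ to get a positive minimum edge weight $\delta=-\log p_{\text{hi}}$, bounding the depth of anything processed before $e$ by $\cost(e)/\delta$, and invoking finite branching. Your write-up is simply more careful on the points the paper leaves implicit --- the explicit Dijkstra queue-minimum invariant, the ``deepest dequeued ancestor'' step that actually forces $e$ into the dequeued set, and the observation that ``monotonically decreasing'' must be read as non-increasing.
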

\begin{proof}
We begin by proving the second part of the claim. Consider a pair of partial productions, $\tilde{e}$ and $\tilde{e}'$
such that $\tilde{e} \to \tilde{e}'$. Then, $P(\tilde{e}) \geq P(\tilde{e}')$. It therefore follows that the sequence of
partial productions, $\tau$, dequeued by $\Enumerate(G, N)$ in line~\ref{enum:alg:enum:dequeue} has monotonically
decreasing probability. Observe that the sequence $\sigma$ of complete productions emitted by $\Enumerate(G, N)$ is a
sub-sequence of $\tau$, and therefore, the probability of productions in $\sigma$ monotonically decreases.

Now, arbitrarily choose a complete production $e \in \Expansions_N$. We will show that $\Enumerate(G, N)$ will
eventually emit $e$. Let $p_{\text{hi}}$ be the highest probability of any production rule in $G$. Recall our
requirement that for each rule $R$, $p_R < 1$, and therefore, $p_{\text{hi}} < 1$ and $\log(p_{\text{hi}}) < 0$.
Before processing $e$, $\Enumerate(G, N)$ can only process those nodes which are at most $\log(P(e)) /
\log(p_{\text{hi}})$ steps away from the root node \lstinline|?| of the search tree $\mathcal{G}$. If there are $N$
rules in the PCFG, then the out-degree of each node is at most $N$, and it follows that there are only finitely many
nodes which can be processed before $\Enumerate(G, N)$ processes $e$. It follows that $e$ is eventually emitted by the
enumerator.
\end{proof}

Unfortunately, when the priority function is instantiated to emulate Dijkstra's algorithm, enumeration does not scale
beyond a few hundred productions. The intuition may be found in the correctness proof above: before emitting $e$, the
enumerator must explore all partial productions with probability bigger than $P(e)$. The number of these productions
rapidly grows with decreasing $P(e)$, and furthermore, most of the processed partial productions are ``very
incomplete'', i.e. containing many instances of \lstinline|?|, and therefore many edges away from turning into complete
productions. This suggests other graph search algorithms, which we will now discuss.


\paragraph{Non-terminal horizons and A* search.}
When considering partial productions $\tilde{e}$, we may speak of two quantities:
\begin{inparaenum}[(\itshape a\upshape)]
\item the cost \emph{already paid}, which we have defined as $\cost(\tilde{e}) = -\log(P(\tilde{e}))$, and
\item the minimum cost \emph{yet to be paid}, before $\tilde{e}$ turns into a complete production.
\end{inparaenum}
We formalize this latter quantity as the \emph{horizon}, defined as:
\begin{equation}
  \horizon(\tilde{e}) = \min_{e_f \text{ s.t. } \tilde{e} \to^* e_f} \cost(e_f) - \cost(\tilde{e}),
  \label{eq:alg:horizon}
\end{equation}
where $e_f$ ranges over all complete productions reachable from $\tilde{e}$. Recall the intuition for the failure of
Dijkstra's algorithm: before emitting $e$, we have to process every partial production with higher probability, and most
such partial productions $\tilde{e}'$ are themselves many steps away from complete productions. Since the horizon
encodes the distance from the partial production to its nearest completion, it is natural to include it in the priority
function $\pi(\tilde{e})$. If we define $\pi(\tilde{e}) = \cost(\tilde{e}) + \horizon(\tilde{e})$, then we obtain A*
search.

The important property of $\pi(\tilde{e})$ is that it is \emph{admissible}, i.e. that $\pi(\tilde{e})$ is always less
than the cost of all complete productions descendant from $\tilde{e})$. More formally, for all complete productions
$e_c$ reachable from $\tilde{e}$,
\begin{equation}
  \pi(\tilde{e}) = \min_{e_c \text{s.t.} \tilde{e} \to^* e_c} \cost(e_c).
  \label{eq:alg:astar}
\end{equation}
We then have the following well-known result of the A* algorithm:

\begin{thm}
\label{thm:alg:astar}
If $\pi(\tilde{e}) = \cost(\tilde{e}) + \horizon(\tilde{e})$, then:
\begin{enumerate}
\item The sequence of complete productions emitted by $\Enumerate(G, N)$, $\sigma = e_1, e_2, e_3, \ldots$ contains all
  productions in $\Expansions_N$ and has monotonically decreasing probability. (This is called the \emph{optimality}
  property.)
\item For all complete productions $e$, every complete optimal algorithm $A'(G, N)$ will, before producing $e$, visit
  all strictly less expensive nodes, $\tilde{e}$, such that $\pi(\tilde{e}) < \pi(e)$ . (This is the property of
  \emph{optimal efficiency}.)
\end{enumerate}
\end{thm}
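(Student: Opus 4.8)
The plan is to verify that instantiating $\Enumerate(G,N)$ with $\pi(\tilde e)=\cost(\tilde e)+\horizon(\tilde e)$ meets the hypotheses of the classical correctness and optimal-efficiency theorems for A* search (Hart--Nilsson--Raphael; Dechter--Pearl), and to re-derive the two parts of the statement directly in the notation of this paper. First I would collect the structural facts. By~(\ref{eq:alg:horizon}) we have $\pi(\tilde e)=\min\{\cost(e_f):\tilde e\to^\ast e_f\}$, which is precisely identity~(\ref{eq:alg:astar}); assuming every non-terminal is productive, this minimum is finite. A complete production $e$ has $\horizon(e)=0$, hence $\pi(e)=\cost(e)$; and $\pi$ is \emph{consistent}, i.e.\ non-decreasing along every edge of $\mathcal G$, because if $\tilde e\to\tilde e'$ then the complete productions reachable from $\tilde e'$ form a subset of those reachable from $\tilde e$, so $\pi(\tilde e)\le\pi(\tilde e')$. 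Finally, $\mathcal G$ is a tree with non-negative edge weights $-\log p_R$, so each node is inserted into $Q$, and dequeued, at most once.

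For the optimality claim I would argue from the following \emph{path invariant}: as long as a fixed complete production $e$ has not yet been emitted, $Q$ contains a node lying on the unique root-to-$e$ path of $\mathcal G$. This holds initially (where $Q$ is initialized to the singleton root node) and is preserved, since whenever that node is dequeued it is strictly incomplete --- $e$ is a proper descendant --- and Algorithm~\ref{alg:enum} then re-inserts all of its neighbors, in particular the next node on the path. Two consequences follow. \textbf{Monotonicity:} if $e_1$ and $e_2$ are emitted in that order but $\cost(e_1)>\cost(e_2)$, then when $e_1$ is dequeued $e_2$ is still unemitted, so $Q$ holds a node $\tilde e$ on the root-to-$e_2$ path with $\pi(\tilde e)=\min\{\cost(e_f):\tilde e\to^\ast e_f\}\le\cost(e_2)<\cost(e_1)=\pi(e_1)$, contradicting that $e_1$ was dequeued as a minimum of $Q$; hence emitted costs are non-decreasing, i.e.\ emitted probabilities non-increasing. \textbf{Completeness:} were some complete $e$ never emitted, the invariant keeps $Q$ nonempty forever, and at every step the dequeued node has priority at most that of the path representative, which is $\le\cost(e)$ by~(\ref{eq:alg:astar}); but every node of priority $\le\cost(e)$ is, again by~(\ref{eq:alg:astar}), an ancestor of a complete $e_f$ with $\cost(e_f)\le\cost(e)$, and since each rule contributes at least $-\log p_{\text{hi}}>0$ to the cost, such $e_f$ use boundedly many rules, so only finitely many nodes have priority $\le\cost(e)$ --- impossible for an infinite run of pairwise-distinct dequeues. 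Thus $\sigma$ contains every production of $\Expansions_N$.

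For optimal efficiency, interpret a \emph{complete optimal} algorithm $A'(G,N)$ as one that, like $\Enumerate$ by Part~1, emits every production of $\Expansions_N$ in non-decreasing order of cost and that, like any expansion-based search of $\mathcal G$, discovers a node only by following the unique path to it from the root. Fix a complete production $e$ and a node $\tilde e$ with $\pi(\tilde e)<\pi(e)$. By~(\ref{eq:alg:astar}), $\tilde e$ has a complete descendant $e^\ast$ with $\cost(e^\ast)=\pi(\tilde e)<\cost(e)$, so $A'$ must emit $e^\ast$ strictly before $e$ (the inequality being strict, tie-breaking is irrelevant). Since $\mathcal G$ is a tree, the only way $A'$ can reach $e^\ast$ is along the unique root-to-$e^\ast$ path, which passes through $\tilde e$; hence $A'$ visits $\tilde e$ before emitting $e^\ast$, and a fortiori before emitting $e$. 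Applying this with $A'=\Enumerate(G,N)$, which is complete optimal by Part~1, shows the bound is tight.

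Most of the work above is bookkeeping; the step I expect to require the most care is pinning down the computational model so that Part~2 is both meaningful and true --- namely, making precise what it means for $A'$ to ``visit'' a node and to be ``optimal'' (the expansion-based, cheapest-first model used above), since without such a model the statement is vacuous or false --- together with checking consistency of $\horizon$ and the path invariant, on which the non-decreasing dequeue order, termination, and the whole tree-search picture ultimately rest.
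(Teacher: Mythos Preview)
Your proposal is correct and follows essentially the same route as the paper's proof: the ancestor-in-queue contradiction for monotonicity, a finiteness bound on low-priority nodes for completeness, and the tree-plus-cheap-descendant argument for optimal efficiency all match. You are more careful than the paper in a few places---you make the path invariant explicit (the paper invokes it only implicitly when it says ``some ancestor $\tilde{e}'_2$ is present in $Q$''), you note consistency of $\pi$ along edges, you flag that Part~2 needs a precise model of $A'$, and your completeness bound routes through the finitely many complete $e_f$ with $\cost(e_f)\le\cost(e)$ rather than the paper's direct depth bound via $\pi(\tilde e)\ge\cost(\tilde e)$---but these are refinements of the same argument, not a different one.
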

\begin{proof}
First, we show that $\Enumerate(G, N)$ produces all complete productions $e \in \Expansions_N$. Similar to Theorem~%
\ref{thm:alg:dijkstra}, we bound the number of nodes that can be processed in line~\ref{enum:alg:enum:dequeue} before
processing $e$. Observe that for all nodes, $\tilde{e}$, $\pi(\tilde{e}) \geq \cost(\tilde{e})$, and for complete
productions $e$,  $\pi(e) = \cost(e)$. Therefore, all nodes processed by $\Enumerate(G, N)$ before processing $e$ are
no more than $\log(P(e)) / \log(p_{\text{hi}})$ steps from the root of the search tree $\mathcal{G}$, where, as before,
$p_{\text{hi}} < 1$ is the highest probability appearing in the PCFG.

We will now show that the productions emitted have monotonically decreasing probability. For this, first observe that
for each complete production $e$, and for each of its (necessarily incomplete) ancestors $\tilde{e}'$, $\pi(\tilde{e}')
\leq \pi(e)$. For the sake of contradiction, let there be some pair $e_1$, $e_2$ of complete productions, such that
$P(e_1) < P(e_2)$, but $\Enumerate(G, N)$ emits $e_1$ before $e_2$. Both $e_1$ and $e_2$ are therefore reachable from
the initial node \lstinline|?|. Consider the state of $Q$ when $e_1$ is dequeued from it. It has to be the case that
some ancestor $\tilde{e}'_2$ is present in $Q$. However, it follows that $\pi(\tilde{e}'_2) \leq \pi(e_2) < \pi(e_1)$,
and therefore the priority queue must have made a mistake in its ordering. It follows that the sequence of productions
emitted by $\Enumerate(G, N)$ has monotonically increasing costs, or equivalently, monotonically decreasing
probabilities.

We will now prove the last part of the theorem. Assume otherwise, so for some algorithm $A'(G, N)$ and some node
$\tilde{e}$ such that such that $\pi(\tilde{e}) < \pi(e)$, $A'$ does not expand $\tilde{e}$ before producing $e$.
We know $\tilde{e}$ has some descendant $e'$ such that $\pi(\tilde{e}) = \cost(e') < \pi(e)$. Because the search space
$\mathcal{G}$ is a tree, if $A'$ does not expand $\tilde{e}$, it follows that it did not enumerate $e'$ before $e$,
violating the assumption that $A'$ is a complete optimal enumerator.
\end{proof}

Note that, in a certain sense, the second part of Theorem~\ref{thm:alg:astar} says that this is the best possible
enumeration algorithm among all algorithms that use the same priority function, because $\Enumerate(G, N)$ does not
expand any nodes $\tilde{e}'$ such that $\pi(\tilde{e}') > \pi(e)$. This also captures our intuition for the superiority
of this priority function over the simpler metric used by Dijkstra's algorithm.

\paragraph{Computing the horizon, $\horizon(\tilde{e})$.}
It can be shown that the cost to be paid to complete $\tilde{e}$ is the sum of the costs needed to expand each of its
unexpanded nodes:
\begin{equation}
  \horizon(\tilde{e}) = \sum_{\lstinline|?|_N \in \tilde{e}} \horizon(\lstinline|?|_N).
\end{equation}
Before starting the enumerator, we therefore compute the non-terminal horizons, $\horizon(\lstinline|?|_N)$, for each
non-terminal symbol $N$. Observe that, by definition, this simply encodes the probability of the most likely expansion,
$e \in \Expansions_N$:
\[
  \horizon(\lstinline|?|_N) = \min_{e \in \Expansions_N} \cost(e) = -\max_{e \in \Expansions_N} \log(P(e)).
\]
The values $\horizon(\lstinline|?|_N)$ can be computed by the following fixpoint computation:
\begin{enumerate}
\item $\horizon^0(\lstinline|?|_N) = \min \{ -\log(p_R) \mid \text{Terminal rule } R \}$.
\item $\horizon^{i + 1}(\lstinline|?|_N) = \min \{ \horizon^i(R) \mid \text{Production rule } R \}$, where the horizon
  of a production rule $R$ is defined as $\horizon^i(R) = -\log(p_R) + \sum \{ \horizon^i(\lstinline|?|_{N'}) \mid
  \text{child non-terminal } N' \text{ appearing in } R \}$.
\end{enumerate}

\subsection{Optimizations}
\label{sub:alg:optimizations}

\paragraph{Eagerly discarding partial productions.}
Given a single input variable \lstinline|a|, consider the synthesis predicate $\phi$ given by:
\begin{lstlisting}
  if (a == 5) { x == 6 }
  else if (a == 7) { x == 9 }
  else { x == a }
\end{lstlisting}
During CEGIS, say the set of concrete input points, $A = \{ 2, 5, 7 \}$. For this problem, $\Enumerate(G, N)$ will
consider partial productions such as
\[
  \tilde{e} = \lstinline|if (x < 6) { x } else { ? }|.
\]
Observe that the conjunction $\Land_{\lstinline|a| \in A} \phi[\lstinline|x| \mapsto e]$ used in the CEGIS loop can only
be evaluated for complete productions $e$. However the partial production $\tilde{e}$ is already incorrect: for the
input point $\lstinline|a| = \lstinline|5|$, regardless of the completion of \lstinline|?|, $\tilde{e}$ will evaluate to
\lstinline|5|, and this fails the requirement that \lstinline|x == 6|.

This pattern of partial productions already failing requirements is particularly common with conditionals and match
statements. Our first optimization is therefore the partial evaluation box shown in Figure~\ref{fig:alg:enum}. For each
input point $\seqa \in A$, we partially evaluate the CEGIS predicate $\pcname \implies \phi[x \mapsto \tilde{e}]$. If
the result is \lstinline|false|, then we discard the partial production $\tilde{e}$ without processing, as we know that
all its descendants will fail the synthesis predicate. Otherwise, if it evaluates to \lstinline|true| or unknown, then
we process $\tilde{e}$ as usual, and insert its neighbors back into the priority queue.

\paragraph{Extending the priority function with scores.}
We just observed that if a partial production fails the synthesis predicate by partial evaluation, then it can be
discarded without affecting correctness. The dual heuristic optimization is to \emph{promote} partial productions which
definitely evaluate to \lstinline|true| on some input points. We do this by modifying the priority function:
\begin{equation}
  \pi(\tilde{e}) = \cost(\tilde{e}) + \horizon(\tilde{e}) + c \times \log(\score(\tilde{e})),
\end{equation}
where $\score(\tilde{e})$ is the number of input points on which the partial production evaluates to \lstinline|true|,
and $c$ is a positive coefficient. Small positive values of $c$ results in an enumerator which strictly follows
probabilistic enumeration, while large positive values of $c$ results in the enumerator favoring partial productions
which already work on many points. While the use of this heuristic renders Theorem~\ref{thm:alg:astar} inapplicable, we
have observed improvements in performance as a result of this optimization.

\paragraph{Indistinguishability.}
Consider the partial production $\tilde{e} = \lstinline|if (x < 5) { x - x } else { ? }|$, and focus on the
sub-expression \lstinline|x - x|. This expression is identically equal to \lstinline|0|, and it is therefore possible
to simplify $\tilde{e}$ to \lstinline|if (x < 5) { 0 } else { ? }|. If two expressions $e$ and $e'$ are equivalent, and
$P(e) > P(e')$, then a desirable optimization is to never enumerate $e'$. Observe that, if properly implemented, this
optimization produces exponential savings at each step of the search: if $e$ and $e'$ are equivalent, then, for example,
it follows that both $e + 3$ and $e' + 3$ are equivalent, and only one of them needs to be enumerated to achieve
completeness.

Indistinguishability~\cite{Transit} is a technique to mechanize this reasoning. Given a sequence of concrete input
points $A = \{ \seqa_1, \seqa_2, \ldots, \seqa_n \}$, a complete production $e$ can be evaluated to produce a set of
output values $S_e = \{ x_1, x_2, \ldots, x_n \}$. During enumeration, if an expression $e'$ is encountered such that
for some previous expression $e$, $S_e = S_{e'}$, then $e'$ is discarded as a potential expansion. The original
expression $e$ can thus be regarded as the representative of the equivalence class of all expressions whose signature is
equal to $S_e$.

In this original formulation of indistinguishability-based pruning, the enumerator worked bottom-up, rather than in the
top-down style we consider in this paper. As a result, the enumerator only needs to process complete productions, rather
than the partial productions which populate $Q$ in our setting.

We have extended this optimization to work with partial expressions, and prune expressions in a top-down enumerator. The
idea is to replace the priority queue $Q$ with a ``deduplicating priority queue'', with the additional feature that it
removes duplicate elements from being dequeued. Such a queue can be implemented as a combination of a traditional
priority queue and a mutable set data structure commonly available in standard language libraries. Next, we maintain a
dictionary mapping all previously seen signatures to the representatives of the respective equivalence classes. Every
time we dequeue a partial production $\tilde{e}$ from $Q$, we evaluate every complete sub-production $e_{\text{sub}}$ on
all input points $\seqa \in A$ to obtain its signature $S_{\text{sub}}$, and replace $e_{\text{sub}}$ with the canonical
representative. This forms the box labelled ``Indist. rewrite 1'' in Figure~\ref{fig:alg:enum}.

We also perform a second, fast indistinguishability rewrite after each partial production is expanded. This is done by
maintaining a map from previously seen expressions to their canonical representatives. Consulting this map for every
complete sub-production of the children, $\tilde{e}_1$, $\tilde{e}_2$, \ldots, of the original production $\tilde{e}$ is
much faster than evaluating them on each concrete input point in $A$. There is some freedom in choosing the placement of
various optimizations in Figure~\ref{fig:alg:enum}: our motivation was that the priority queue can get very large, and
many elements enqueued into the queue will never be dequeued, and that it is therefore wise to postpone as much
processing as possible to when partial productions are extracted from $Q$.


\section{Term Grammars}
\label{sec:grammar-gen}

In the previous section, we explained our synthesis algorithm
using a PCFG as a black box. In this section, we explain how
the term grammars in our system are defined and extracted.

We designed our grammars with the following goals in mind:

\begin{itemize}
    \item \emph{Flexibility.} The system should allow the user to define
        and easily deploy a different grammar for each application
        or application domain.
    \item \emph{Expressivity.} The grammars should be able to express complex
        relations between expressions, in a more elaborate way than simply
        mapping each type of the target language to a set of production rules.
        This allows the user to constrain the space of synthesized programs
        to exclude redundant or irrelevant ones,
        thus improving the scalability of the synthesizer.
    \item \emph{Automation.} Apart from hand-coding the grammars,
        it should also be possible to automatically extract them 
        from a corpus of programs, possibly taken from a specific application domain.
\end{itemize}

To this end, we introduce \textit{Probabilistic Attribute Grammars},
which address all of the above points:

\begin{itemize}
    \item A custom grammar is provided by the user along with the synthesis problem
        as a set of Scala source files (the \emph{grammar files}),
        in which production rules are expressed as plain Scala functions.
        Switching between grammars reduces to just including different files in compilation.
        It is also trivial to use any desired combination of such files.
        If the user does not want to manually provide a specific grammar for the task,
        they can always fall back to the system's predefined grammar.
    \item The nonterminal symbols of the grammar are not plain types,
        but instead are types contained in a wrapper class that annotates them with extra information.
        The wrapped types can represent a subtype of the underlying type,
        or some other condition under which specific rules are accessible,
        which would not be expressible if we used plain nonterminals as types.
        While processing the input grammar, our system will desugar
        away those wrappers to produce values of the underlying plain type.

        The approach of annotating types with attributes has already been introduced
        in \cite{LeonSynth-SYNT}, but there the attributes and their interpretation
        were predefined. Here we allow the user to provide their own wrapper
        classes.
        
        It is of course also possible to omit the wrappers completely
        and use plain Scala types as nonterminals.
    \item We provide a system to automatically extract grammar files
        from a corpus of PureScala programs.
\end{itemize}

The grammar files undergo a series of preprocessing steps before generating
a Probabilistic Attribute Grammar, the exact form of which is
conditional to the synthesis problem. The synthesizer itself is oblivious
to this procedure, which views the grammar just as a function from
nonterminal symbols to production rules.

\subsection{Grammar files, rules and frequencies}
\label{grammarfiles}

As mentioned above, the user can provide a series of grammar files for compilation 
along with the synthesis problem.
The format of the files was designed with the aim to be readable and writable by a human user,
but also straightforward to extract from a corpus of programs.

To illustrate the format though an example, let us look at a grammar file describing
a set of simple arbitrary precision integer programs:

\begin{lstlisting}
@production(10)
def plus(a: BigInt, b: BigInt): BigInt = a + b

@production(5)
def minus(a: BigInt, b: BigInt): BigInt = a - b

@production(5)
def o: BigInt = BigInt(1)

@production(10)
def z: BigInt = BigInt(0)

@production(20)
def vBigInt: BigInt = variable[BigInt]
\end{lstlisting}

A function annotated with \lstinline{@production}
is treated as a grammar production rule.
A rule of the form

\lstinline{def pFoo(a: T1, b: T2): T3 = foo(a, b)}

should be interpreted in CFG notation as 

$\textsf{T3} \Coloneqq foo(T1, T2)$

The annotation's argument indicates the absolute frequency with which
this rule is encountered. The relative frequencies will be generated
in the end of a processing procedure. In this example, though, they can be computed
straightforwardly from the absolute frequencies.

An invocation of the built-in function \lstinline{variable[T]},
for some type \lstinline{T}, indicates the absolute frequency of generating
any variable of type \lstinline{T}.
This built-in compensates for the fact that we do not know the names of
the variables accessible to the synthesizer a priori.
When they become available, this production will be instantiated
to a concrete production for each variable of the correct type,
with the probability distributed equally among these variables.
If, for example, we are synthesizing a function with two parameters
\lstinline{(x: BigInt, y: BigInt)},
the above grammar file would generate the following grammar:

\[
\begin{array}{rcll}
  \textsf{BigInt}  & \Coloneqq	& \textsf{BigInt} + \textsf{BigInt} & (p_+ = 0.2) \\
 				   & \mid       & \textsf{BigInt} - \textsf{BigInt} & (p_- = 0.1) \\
				   & \mid       & \lstinline|0| & (p_0 = 0.1) \\
                   & \mid 		& \lstinline|1| & (p_1 = 0.2) \\
                   & \mid 		& \lstinline|x| & (p_x = 0.2) \\
                   & \mid 		& \lstinline|y| & (p_y = 0.2)
\end{array}
\]

\subsection{Generic productions}

Generic grammar productions offer a concise way of
representing operations on generic data-structures. For
instance, we could describe common operations on a generic
list data-structure with a few user-provided productions:

\begin{lstlisting}
@production(..)
def single[A](a: A): List[A] = List(a, Nil)

@production(..)
def listInsert[A](a: A, l: List[A]): List[A] = List(a, l)

@production(..)
def listTail[A](l: List[A]): List[A] = l.tail

@production(..)
def listHead[A](l: List[A]): A = l.head

@production(..)
def listSize[A](l: List[A]): BigInt = l.size
\end{lstlisting}

When looking for productions of a specific ground type,
we instantiate generic productions that \emph{can}
return that type, and introduce them as normal monomorphic
productions.
For example, if we are looking for productions of
\lstinline{List[BigInt]},
we consider the first three rules above instantiated with
$T \mapsto BigInt$, and the forth with $T \mapsto
List[BigInt]$. The fifth rule is incompatible.

Because the set of types is typically infinite, generic
productions with a type variable absent from the
return type represent an infinite number of ground
productions. This is for instance the case with
\lstinline{listSize}.

Our enumeration procedure described above requires
that each non-terminal symbol has a finite number of ground
productions. To solve this conflict, we abandon the
theoretical completeness of the instantiation by only
considering a set of \emph{reasonable}
types $\mathcal{T}$ during instantiation. This set of types
is initialized with all types returned by ground productions
and then iteratively expanded by discovering new return types of
$\mathcal{T}$-instantiations of generic productions.
For example, starting with $\mathcal{T} = \{\textsf{Int}\}$,
running one iteration yields $\mathcal{T} = \{\textsf{Int},
\textsf{List[Int]}\}$ as \lstinline/single[Int](..)/ returns
a value of type \lstinline/List[Int]/.

The number of discovery iterations we perform has a big
practical impact on the number of final productions, and
thus on performance. On benchmarks with several generic
data-structures, this discovery is typically exponential.
By under-approximating the size of the smallest expression
of any discovered type in $\mathcal{T}$, we bound the
search to reasonable types. 
%
We leave to future work how to better integrate the
instantiation of generic rules within the enumeration
process.

\subsection{Annotated nonterminals}
\label{annotated}
Simple grammars like the one displayed above, where the nonterminal symbols
are plain types, are not expressive enough to handle all constraints that are obvious to a human programmer.
For example, they cannot express that an operand of an addition cannot be 0.
When used as-is, such grammars generate a large number of redundant programs,
which prevent the synthesizer from scaling to larger program sizes.

To amend this, we provide the programmer with a general mechanism
to more accurately express the restrictions of their grammar
by enhancing their nonterminals with attributes
in additional to the plain Scala type.
Nonterminals with the same type but different attributes
are considered distinct, and can have different production rules.

As an example, suppose the user wants to more accurately specify the grammar
of Section~\ref{grammarfiles} to exclude 0 from the operands of \lstinline{+},
as well as the second operand of \lstinline{-}.
This can be captured with the following attribute grammar:
\[
\begin{array}{rcll}
\textsf{BigInt}         & \Coloneqq	& \textsf{BigInt\{BI\}} & (p = 1.0) \\
\textsf{BigInt\{BI\}}   & \mid      & \textsf{BigInt\{NZ\}} & (p = 0.8) \\
                        & \mid      & \lstinline|0|         & (p = 0.2) \\
\textsf{BigInt\{NZ\}}   & \mid	    & \textsf{BigInt\{NZ\}} + \textsf{BigInt\{NZ\}} & (p_1 = 0.25) \\
                        & \mid	    & \textsf{BigInt\{BI\}} - \textsf{BigInt\{NZ\}} & (p_1 = 0.125) \\
                        & \mid 		& \lstinline|x| & (p_x = 0.5) \\
                        & \mid 		& \lstinline|1| & (p_y = 0.125)
\end{array}
\]

The above grammar contains two annotated nonterminals,
\textsf{BigInt\{NZ\}} and \textsf{BigInt\{BI\}}.
The former expresses nonzero integers and the second one
the unrestricted integer type. Thus, generating \textsf{BigInt} values
reduces to generating values of the annotated type 
\textsf{BigInt\{BI\}},
which is captured in the first rule of the grammar.

In Scala source code, this is expressed by defining
implicit Scala classes containing a field of the base type,
and annotating them with the \lstinline{@label} annotation.
The source file gets desugared by the system to the above grammar.

\begin{lstlisting}
@label implicit class NZ(val v: BigInt)

@label implicit class BI(val v: BigInt)

@production(10) def plus(a: NZ, b: NZ): NZ = a.v + b.v

@production(5)  def minus(a: BI, b: NZ): NZ = a.v - b.v

@production(5)  def o: NZ = BigInt(1)

@production(10) def z: BI = BigInt(0)

@production(20) def vBigInt: NZ = variable[BigInt]

@production(40) def nz2Bi(nz: NZ): BI = nz.v

@production(1)  def start(b: BI): BigInt = b.v

\end{lstlisting}

\subsubsection{Built-in axiom system}
\label{axioms}
In case the user does not want to manually implement a custom attribute grammar as above,
they can reside to a set of default built-in axioms described in \cite{LeonSynth-SYNT}.
In case of the default grammars, those axioms are automatically applied to each combination of
nonterminal and production rule.
In case of a manually provided grammar, the user has to guide the system by
tagging the production rules with a set of predefined tags.
A couple of example tagged rules could be the following:

\begin{lstlisting}
@production(10)
@tag("0")
def z: BigInt = BigInt(0)

@production(10)
@tag("commut")
def pFoo(b1: BigInt, b2: BigInt): BigInt = foo(b1, b2)
\end{lstlisting}

\lstinline{@tag("0")} indicates that this rule produces the constant 0
(e.g. it should be treated as neutral element of addition),
whereas \lstinline{@tag("commut")} specifies \lstinline{foo} as a commutative
operation. These rules will undergo a set of predetermined optimizations
described in \cite{LeonSynth-SYNT}.

\subsection{Extracting an attribute grammar form a corpus}

Whereas manually specifying grammars is very useful for specific tasks and small DSLs,
it would be tedious to expect a user-provided grammar for generic programming tasks,
where numerous types and functions are in scope.

To this end, we provide an automated system to extract a grammar from a corpus
of Scala programs. We provide two different grammar extractors,
described in the following paragraphs.

\subsubsection{Unconditional Frequency Extractor}

The simpler Unconditional Frequency Extractor (UFE) uses plain Scala types as nonterminals,
but automatically tags them as in \ref{axioms},
so the built-in axioms can be instantiated on them.
Each production rule generated by the UFE corresponds to a specific \emph{kind}
of expression encountered in the corpus,
whereas the rule's frequency reflects the number of occurrences
of this kind.
By kind of expression, we mean the expression type
along with all information required to reconstruct the expression
from its arguments (except for variables, where the name is not preserved).
More specifically,
\begin{itemize}
    \item For variables, only the variable type (the name does not get preserved).
        All variables of a specific type will all be summarized by an invocation
        to the \lstinline{variable} built-in.
    \item For literals, the literal itself (which takes 0 operands).
        E.g. \lstinline{0}, \lstinline{BigInt(42)}.
    \item For function invocations, the invoked function along with its actual type parameters
        (which may be generic).
        E.g. \lstinline{foo[BigInt](?, ?)}, \lstinline{foo[T](?, ?)},
		where by \lstinline{?} we denote a hole to be filled during synthesis.
    \item For type constructors, the type constructor along with its actual
        type parameters. E.g. \lstinline{Cons[BigInt](?, ?)},
        \lstinline{Nil[A]()}.
    \item For every other expression, only the type of the AST of that expression,
        and if needed its type instantiation.
        E.g. \lstinline{? + ?}, \lstinline{Set[BigInt](?)}.
\end{itemize}

Because the grammars generated by this extractor do not take into
account the parent of the expression,
we will call them \emph{depth-1 grammars}.

For example, a corpus containing only the expression \lstinline{(x * 2)}
would result in the following file:

\begin{lstlisting}
@production(1)
@tag("const")
def pBigIntInfiniteIntegerLiteral0(): BigInt = BigInt(2)

@production(1)
@tag("times")
def pBigIntTimes(v0 : BigInt, v1 : BigInt): BigInt = v0 * v1

@production(1)
@tag("top")
def pBigIntVariable(): BigInt = variable[BigInt]
\end{lstlisting}

\subsubsection{Conditional Frequency Extractor}

The second, more sophisticated extractor, extracts the conditional
probabilities of the occurrence of each expression kind,
given the expression's relation with its parent.
More specifically, along with each expression, it stores a pair
$(par, pos)$, where $par$ is the type of AST of the parent of the expression,
and $pos$ the expression's position in its parent's operands.
This pair is optional and omitted for a top-level expression.
For example, if the extractor encountered the expression \lstinline{(x * 2)},
it would store the triples \lstinline{(? * ?, null, null)},
\lstinline{(<variable>, ? * ?, 0)} and \lstinline{(2, ? * ?, 1)}.
To express these conditional probabilities in a grammar file,
it uses annotated nonterminals as described in \ref{annotated}.

Because these grammars express probabilities which are derived from
an expression along with its parent,
we will call them \emph{depth-2 grammars}.

A corpus containing only the above expression would produce the following
grammar file:

\begin{lstlisting}
@label implicit class BigInt_TOPLEVEL(val v : BigInt)  
@label implicit class BigInt_0_Times(val v : BigInt)  
@label implicit class BigInt_1_Times(val v : BigInt)  

@production(1)
def pBigIntTimes(v0 : BigInt_0_Times, v1 : BigInt_1_Times): BigInt_TOPLEVEL =
  v0.v * v1.v  
@production(1)
def pBigIntVariable(): BigInt_0_Times = variable[BigInt]  
@production(1)
def pBigIntInfiniteIntegerLiteral(): BigInt_1_Times = BigInt(2)  
@production(1)
def pBigIntStart(v0 : BigInt_TOPLEVEL): BigInt = v0.v
\end{lstlisting}

\subsection{Biasing grammars for repair}

The flexibility and automation of the system give us a great opportunity
to exploit the local structure of the program.
Especially for program repair, we can ease the task of generating a reasonable
patch for an erroneous program by collecting information about how 
the user has chosen to program in the same module.

This is achieved as follows: before repair, we generate an additional grammar file
from the file under repair, and include it for compilation along with
the general-purpose grammar file extracted from the corpus.
This way, we bias the synthesis process towards local definitions,
without completely disregarding the general programming model.

\section{Experimental Results}
\label{sec:experiments}

To experimentally evaluate the above ideas, we integrated them into
the Leon synthesis and repair framework \cite{LeonSynth-OOPSLA}
The Probabilistic Enumeration (PE) of Section~\ref{sec:alg} is deployed as
a closing deductive synthesis rule in the synthesis framework,
in place of Symbolic Term Exploration (STE) \cite{LeonSynth-SYNT}.

For repair, we slightly modify Probabilistic Enumeration to
generate terms similar to the erroneous expression, in the spirit of 
\cite{LeonSynth-CAVRepair}.
This can be achieved by automatically modifying the grammar
passed to Enumeration, without any other changes to the algorithm.
This rule is deployed along with plain PE within the synthesis stage
of the repair pipeline,
after test generation and fault localization

In our evaluation, we aimed to measure how well our algorithms
perform in isolation, rather than how well they integrate with the
synthesis framework.
This is straightforward in repair,
as the erroneous implementation already defines the
high-level structure of the function under repair,
and the closing rules can usually be deployed alone,
without instantiating any decomposition rules first.

In the following tables, we compare the efficiency of synthesizing repairs
for a set of erroneous benchmarks taken from \cite{LeonSynth-SYNT},
along with a few new, more challenging ones.

In total, we test 4 tool configurations:

\begin{itemize}
    \item Symbolic Term Exploration with built-in grammars and 
        built-in axioms, i.e. the old version of Leon (\emph{STE})
    \item Probabilistic Enumeration with built-in grammars
        and built-in axioms (\emph{PB})
    \item Probabilistic Enumeration with depth-1 extracted grammars
        and built in axioms (\emph{PD1})
    \item Probabilistic Enumeration with depth-2 grammars,
        without built-in axioms (\emph{PD2})
\end{itemize}

In both the latter cases, the grammars come from a combination
of grammar files extracted from a corpus and the program under repair.
We do not use built-in axioms with depth-2 grammars,
as they are redundant to some degree and they would overcomplicate the
grammar, while yielding diminishing returns.

The detailed results are given in \ref{table:repair}.
The first two columns of the table give an indication of the complexity of the problem
by displaying the total size of the program and the size of the solution
(which might not be generated from scratch, but rather as a variation
of the erroneous expression). The final four columns display
the repair times for the four deployments of the tool described above.
Repair times exclude compilation, test generation, and verification of the solution.

\newcommand{\leg}[2]{{#1: #2\newline }}
\newcommand{\mc}[2]{\multicolumn{2}{#1}{#2}}
\newcommand{\tme}{time}
\renewcommand{\bname}[1]{{\footnotesize \texttt{#1}}}
\newcolumntype{R}{>{\hspace{-7pt}}r}
\newcolumntype{C}{>{\hspace{-3pt}}c}

\begin{table}
\begin{center}
\begin{tabular}{l|rr|r|r|r|r}
Operation                     & \mc{c|}{Sizes} &  STE     &   PB    &  PD1  &    PD2  \\
                              & Prog &  Sol    &          &         &       &         \\
\hline                                                                                 
\bname{ Compiler.desugar    } & 717  &     3   &   3.3    &   1.9   &  2.3  &    2.9   \\
\bname{ Compiler.desugar    } & 715  &     2   &   5.2    &   3.7   &  4.3  &    3.4   \\
\bname{ Compiler.desugar    } & 719  &     7   &   3.0    &   1.8   &  2.1  &    2.0   \\
\bname{ Compiler.desugar    } & 719  &     7   &   3.8    &   3.0   &  3.0  &    3.2   \\
\bname{ Compiler.desugar    } & 719  &    14   &   5.2    &   3.4   &  3.2  &    3.7   \\
\bname{ Compiler.simplify   } & 765  &     4   &   2.8    &   1.8   &  1.8  &    3.1   \\
\bname{ Compiler.semUntyped } & 738  &     5   &          &         & 16.8  &    5.1   \\
\bname{ Heap.merge          } & 382  &     3   &   6.0    &   4.9   &  5.1  &    5.2   \\
\bname{ Heap.merge          } & 382  &     1   &   2.8    &   2.0   &  2.0  &    2.2   \\
\bname{ Heap.merge          } & 382  &     3   &   5.7    &   4.7   &  5.1  &    5.3   \\
\bname{ Heap.merge          } & 382  &     9   &   4.4    &   3.6   &  3.4  &    3.7   \\
\bname{ Heap.merge          } & 384  &     5   &   5.1    &   6.4   &  5.2  &    15.6  \\
\bname{ Heap.merge          } & 382  &     2   &   4.2    &   2.6   &  2.7  &    2.9   \\
\bname{ Heap.insert         } & 345  &     8   &   3.0    &   2.8   &  2.9  &    2.9   \\
\bname{ Heap.makeN          } & 384  &     7   &   5.6    &   7.9   &  8.7  &    8.6   \\
\bname{ List.pad            } & 830  &     8   &   2.3    &   1.5   &  1.6  &    2.7   \\
\bname{ List.++             } & 740  &     3   &   2.6    &   1.5   &  1.2  &    8.1   \\
\bname{ List.:+             } & 772  &     1   &   3.0    &   1.1   &  1.2  &    1.9   \\
\bname{ List.replace        } & 774  &     6   &   3.3    &   1.7   &  1.7  &    1.7   \\
\bname{ List.count          } & 827  &     3   &   2.2    &   1.2   &  1.4  &    1.3   \\
\bname{ List.find           } & 827  &     2   &   2.5    &   1.7   &  1.6  &    1.7   \\
\bname{ List.find           } & 829  &     4   &   2.8    &   1.6   &  1.8  &    1.7   \\
\bname{ List.find           } & 830  &     4   &   2.7    &   1.7   &  1.9  &    1.7   \\
\bname{ List.size           } & 755  &     4   &   2.5    &   1.1   &       &          \\
\bname{ List.sum            } & 774  &     4   &   2.3    &   1.5   &  1.5  &    1.5   \\
\bname{ List.-              } & 774  &     1   &   1.5    &   0.5   &  0.5  &    1.7   \\
\bname{ List.drop           } & 815  &     4   &   2.4    &   1.4   &  1.6  &    1.5   \\
\bname{ List.drop           } & 815  &     3   &   4.0    &   3.3   & 46.9  &    2.4   \\
\bname{ List.\&             } & 774  &     4   &          &         &  3.1  &          \\
\bname{ List.count          } & 774  &     1   &          &         &170.7  &          \\
\bname{ Numerical.power     } & 225  &     5   &   2.9    &   1.1   &  1.2  &          \\
\bname{ Numerical.moddiv    } & 174  &     3   &   1.9    &   0.8   &  0.8  &    1.0   \\
\bname{ MergeSort.split     } & 284  &     5   &   3.6    &   2.3   &  2.5  &    8.4   \\
\bname{ MergeSort.merge     } & 286  &     7   &   3.3    &   2.3   &  2.5  &    2.2   \\
\bname{ MergeSort.merge     } & 286  &     3   &   5.9    &   4.8   &  5.2  &    5.0   \\
\bname{ MergeSort.merge     } & 284  &     5   &   3.3    &   2.4   &  2.3  &    13.7  \\
\bname{ MergeSort.merge     } & 286  &     1   &   2.3    &   1.6   &  1.7  &    1.7   \\
\end{tabular}
\caption{Benchmarks for repair \label{table:repair}}
\end{center}
\end{table}

To summarize the results, the new enumerator is slightly faster for the built-in grammars,
whereas the depth-1 grammars are sometimes slightly slower, but manage to solve a couple more benchmarks.
Unfortunately, there is a relatively easy benchmark that depth-1 fails to solve.
The depth-2 grammars, though they show some promise (in the
\lstinline{Compiler.semUntyped} benchmark), they generally perform badly.
One reason for this behavior are the conflicts between annotated nonterminal symbols
generated from the general corpus with those from the file under repair.
Both sources will generate annotated non-terminals that express the same conditional
probability, overloading the synthesizer with duplicated work.
This is a shortcoming we plan to address in the near future.

\section{Related Work}
\label{sec:related}

\subsection{Program synthesis}
\label{sec:related:synt}

Program synthesis has been a challenge problem in computer science since its early days~(see, for example,~%
\cite{Manna-1971}). With increasing computing power, and rapid improvements in the performance of SAT and SMT solvers
starting from the early 2000s, the problem of program synthesis has recently received renewed attention. The seminal
Sketch project~\cite{Sketch} was the first to demonstrate the feasibility of using modern constraint solving technology
in synthesis tools, and used the popular counter-example guided inductive synthesis~(CEGIS) framework to solve synthesis
problems~\cite{Gulwani-etal-2011}. 

Syntax-guided synthesis~(SyGuS) has emerged as a common formulation and interchange format in which to express many
program synthesis problems~\cite{SyGuS}. Our problem formulation within Leon shares two important similarities with
SyGuS:
\begin{enumerate}
\item \emph{Multi-modal specifications}, which can include any combination of input-output examples~(such as ``$f(2, 3)
  = 8$''), logical constraints~(such as ``$\forall x, y, f(x, y) > x + y + 2$''), and even partial implementations~(such
  as ``$\forall x, y, x < y \implies f(x, y) = x + y + 3$'').
\item \emph{Expression grammars}, or equivalently, a library of \emph{components}~\cite{Brahma}, are used to
  syntactically constrain the space of programs. This is important both to make synthesis feasible for complicated
  specifications, and for when the results are to be used in specialized domains, such as assembly code for new computer
  architectures~\cite{Chlorophyll}.
\end{enumerate}
Of the various techniques to solve SyGuS problems, ESolver, which enumerates pairwise distinguishable expressions~%
\cite{Transit}, EUSolver, which combines indistinguishability-based enumeration with decision-tree based condition
inference~\cite{EUSolver}, and the refutation-proof-based CVC4 solver~\cite{CVC4-SyGuS} have broadly emerged as the most
efficient SyGuS solvers. Our present work may be viewed as an attempt to bias the enumerative solver with statistical
facts learned from a code corpus.

We would like to highlight that program synthesis and code repair within Leon is a much more challenging problem than
synthesis within SyGuS and related systems: the main difficulties include the rich type system and algebraic data types
of Scala, and the synthesis of recursive functions. In this light, our problem is most similar to
Escher~\cite{Albarghouthi2013} and $\lambda^2$~\cite{Feser2015}: the main differences are that Escher enumerates
programs bottom-up rather than top-down, and our use of PCFGs to improve the performance of the enumeration algorithm.


\subsection{Type-driven synthesis}
\label{sec:related:types}

One appealing approach to program synthesis is to phrase the problem as the type-inhabitation problem in a sufficiently
expressive type system. Examples of such systems include InSynth~\cite{InSynth}, and Synquid~\cite{Synquid}. The Myth
project~\cite{Osera-Myth-15, Osera-Myth-16} extends the type system of the host language with refinement types
corresponding to input-output examples. It can therefore be viewed as an attempt to provide a type-theoretic
interpretation to Escher. The Synquid system~\cite{Synquid} is an elegant extension of this idea, by encoding the
specification as part of the type of the required expression using the liquid type framework.

Previous work on synthesis within Leon~\cite{LeonSynth-OOPSLA, LeonSynth-SYNT} may also be viewed as being implicitly
type-driven. The path conditions and postcondition obtained after each invocation of the deduction rules can be
rephrased as additional typing constraints on the expression being synthesized. As discussed in Section~\ref{sec:alg},
the present PCFG-based enumeration is intended as a drop-in substitute for the symbolic term exploration rule previously
used by Leon.


\subsection{Statistical analysis of code corpora}
\label{sec:related:corpora}

With increasing availability of large open-source code repositories such as GitHub and Bitbucket, the statistical
analysis of code corpora has become an exciting research problem. Code repositories have been used to learn coding
idioms~\cite{AllamanisSutton-Idioms}, to automatically suggest names for program elements~\cite{AllamanisSutton-Names},
and to deobfuscate JavaScript code~\cite{Vechev-BigCode}. Our paper is an attempt to apply similar techniques to
accelerate program synthesis.

The program completion tool Slang~\cite{Vechev-PLDI14} uses $n$-grams and recurrent neural networks to predict missing
API calls in code snippets. There are two main aspects which distinguish our work from Slang:
\begin{inparaenum}[(\itshape a\upshape)]
\item the presence of hard correctness requirements in Leon in the form of pre-/post-conditions, and
\item program synthesis in Leon is fundamentally about synthesizing expressions rather than API call sequences, and
  prediction systems such as $n$-grams are insufficient to create the nested recursive structure inherent in the output
  we produce.
\end{inparaenum}
The DeepCoder tool~\cite{DeepCoder} uses a recurrent neural network (RNN) to predict the presence of elements in the
program being synthesized. The output of this neural network is then used to guide a more exhaustive search over the
space of possible programs. In this sense, DeepCoder is similar to the PCFG-guided enumeration of Leon, and the use of
neural networks to accelerate the program synthesis problem is an intriguing direction for future research.

Probabilistic context-free grammars (PCFGs) are a classical extension of context-free grammars~\cite{NLP-Jurafsky}. They
may be used both to model ambiguity (for applications in natural language processing), and to model probability
distributions over the generated language, which motivates our application in accelerating code synthesis. The more
recent model of probabilistic higher-order grammars (PHOGs)~\cite{Vechev-PHOG, Vechev-PHOG0} extends PCFGs by allowing
the expansion probabilities of a non-terminal node to depend on attributes such as node siblings and DFS-predecessors.
Experiments indicate that the PHOG model is significantly better at predicting elements of JavaScript programs than
PCFGs. Extending the probabilistic model of our paper to use PHOGs instead of PCFGs is an immediate area of future work.

\subsection{Search algorithms}
\label{sec:related:search}

One of the main technical insights of this paper is to phrase syntax-guided program synthesis as an instance of graph
search. In this space, A*~\cite{AStar} is a classical algorithm used for path finding and graph traversal. As described
in Section~\ref{sec:alg}, A* orders the queue of still-unexplored nodes by their priority values, $f(n) = g(n) + h(n)$,
where $g(n)$ is the cost required to reach the node $n$, and the heuristic function $h(n)$ is the estimated cost from
$n$ to the target node. The historically older Dijkstra's algorithm for shortest paths through graphs is a special case,
where the heuristic function $h(n)$ is identically equal to $0$.

Recall that the graph of partial expansions forms a tree: in this setting, if the cost function $h(n)$ is
\emph{admissible}, i.e. that $h(n) \leq \tilde{h}(n)$, where $\tilde{h}(n)$ is the actual cost of the shortest path from
$n$ to the target node, then A* discovers the shortest path to the target node (\emph{optimality}). The second important
property of A* is that of \emph{optimal efficiency}: of all algorithms using the same heuristic function, A* explores
the fewest nodes to reach the target. In the setting of this paper, optimality implies discovering the expansion with
highest probability. By extending $h(n)$ to include the number of satisfied examples, we have relaxed optimality in
favour of empirical performance improvements.

A* is simply the best known of a large family of search algorithms~(see chapters~3 and~4 of \cite{AIMA}), including
iterative deepening A* (IDA*,~\cite{IDAStar}), simplified memory-bounded A* (SMA*,~\cite{SMAStar}), fringe
search~\cite{FringeSearch}, beam search, swarm search, and various instantiations of genetic
algorithms~\cite{GAMitchell}. Extending the techniques of this paper to these algorithms is an exciting direction of
future work.

\section{Conclusion + Future Work}
\label{sec:conclusion}

In this paper, we present a technique to use statistical information extracted from code corpora, encoded as
probabilistic attribute grammars, to guide program synthesis techniques. The method is based on rephrasing expression
enumeration as graph search, and applying classical techniques such as A* to solve these problems. We implemented the
algorithm in the Leon synthesis and verification system, and obtained a much more robust and scalable synthesizer in
comparison to the previous implementation. Our end goal is a predictable one-touch program synthesizer in a rich
programming environment such as Scala: there are several exciting directions of future work, such as investigating other
search methods including swarm search and genetic algorithms, incorporating richer models of statistics such as
probabilistic higher-order grammars (PHOGS)~\cite{Vechev-PHOG}, and combining type- and syntax-driven and statistical
techniques into a highly expressive and performant program synthesizer.

\bibliographystyle{eptcs}
\bibliography{references}

\end{document}